\begin{document}

\def\version{22 October 2006}
\def\version{14 April 2008}

\def\co{\text{co}}
\def\rootJn{J_n^{1/2}}
\def\irootJn{J_n^{-1/2}}
\def\siln{\sum_{i\le n}}
\def\that{\hat t}
\def\bp{\mathbf p}
\def\bp{\mathbf p}
\def\bt{\mathbf t}
\def\bA{\mathbf A}
\def\bV{\mathbf V}
\def\bX{\mathbf X}
\def\bY{\mathbf Y}
\def\yb{\mathbf y}
\def\bZ{\mathbf Z}
\def\bz{\mathbf z}
\def\be{\mathbf e_1}
\def\bm{\mathbf m}
\def\bw{\mathbf w}

\def\rootlambda{\sqrt{\mathstrut\lambda}}
\def\rootlambda{\lam^{1/2}}

\def\ZZ{{\mathbb Z}}
\def\bmw{{\mathbf m(\mathbf w)}}
\def\bV{{\mathbf V}}

\def\AND/{\qt{and}\quad}
\def\diam{\text{diam}}

\title[Chromatic numbers]{
Conditioned Poisson distributions\\ and the 
concentration of chromatic numbers\\
}
\date{\today}

\author{John Hartigan, David Pollard and Sekhar Tatikonda\\
Yale University}

\address{
Statistics and Electrical Engineering Departments\\
 Yale University}
 \email{firstname.lastname@yale.edu for each author}
\urladdr{http://www.stat.yale.edu/\~{}ypng/}

\keywords{Random graph, chromatic number, second moment method, 
categorical data, two-way tables, Poisson
counts}

\begin{abstract}\normalsize
The paper provides a simpler method for proving a delicate inequality that was used by Achlioptis and Naor to establish asymptotic concentration for chromatic numbers 
of Erd\"os-R\'enyi random graphs.  The simplifications come from two new ideas.  The first involves a sharpened form of a piece of statistical folklore regarding goodness-of-fit tests for two-way tables of  Poisson counts  under linear conditioning constraints.  The second idea takes the form of a new  inequality that controls the extreme tails of the distribution of a quadratic form in independent Poissons random variables.
\end{abstract}
\maketitle


\section[intro]{Introduction}
Recently, \citeN{AchlioptasNaor05AM} established a most elegant result
concerning colorings of the Erd\"os-R\'enyi random graph, which has vertex set
$V=\{1,2,\dots,n\}$ and has each of the~$\binom n2$  possible edges included independently with probability~$d/n$, for
a fixed parameter~$d$.
  They showed that, as~$n$ tends to infinity, the chromatic number
concentrates (with probability tending to one) on a set of two values, which
they specified as explicit functions of~$d$. The
main part of their argument used the ``second moment
method'' \cite[Chapter~4]{AlonSpencer2000} to establish existence of
desired colorings with probability bounded away from zero.  Most of their
paper was devoted to a delicate calculation bounding the ratio of a second
moment to the square of a first moment.

More precisely,  A\&N considered the quantity
$$
A_n(c) := \frac{n^{k-1}}{k^{2n}}\left(\strut 1-\frac1k\right)^{-2nc}\sum_{\ell\in\hh_k}
\frac{n!}{\prod_{i,j}\ell_{ij}!}\left(\strut 1-\frac2k
+\sum\NL_{i,j}\left(\frac{\ell_{ij}}{n}\right)^2\right)^{nc} ,
$$
where $\hh_k$ denotes the set of all $k\times k$ matrices with nonnegative entries for
which each row and column sum equals $B:=n/k$. (With no loss of generality, A\&N assumed
that~$n$ is an integer multiple of~$k$.) 
 They needed to show, for each fixed $k\ge3$,  that  
\begin{equation}
A_n(c)=O(1)
\qt{when
$c< (k-1)\log(k-1)$.}
\label{main.inequality}
\end{equation}

In this paper we show how the A\&N
calculations can be simplified by
using results about conditioned Poisson distributions. More precisely, we
show that the desired behaviour of~$A_n(c)$ follows from a  sharpening of
a conditional limit theorem due to
\citeN{Haberman74book} together with some elementary facts about
the Poisson distribution. 

In Section~\ref{poisson.facts} we will establish some basic notation and record some elementary facts about the Poisson distribution.
In Section~\ref{heuristics}  we will explain how $A_n(c)$ can be bounded by a conditional expectation of an exponential function of the classical goodness-of-fit statistic for two-way tables. We will outline our proof of~(\ref{main.inequality}), starting from
a $\chi^2$ heuristic that can be sharpened (Section~\ref{condit.poisson}) into a rigorous proof that handles the contributions to~$A_n(c)$ from all except some extreme values of~$\ell$. To control the contributions from the extreme~$\ell$ we will use an inequality (Lemma~\ref{h.ineq}) that captures the large deviation behaviour of conditioned Poissons. The proof of the Lemma (in Section~\ref{delicate}) is actually the most delicate part of our argument.

\section{Facts about the Poisson distribution}  \label{poisson.facts}
Many of the calculations in our paper  involve the  convex function
\begin{equation}
h(t)= (1+t)\log(1+t)-t\qt{for $-1\le t$},
\label{h.def}
\end{equation}
which achieves its minimum value of zero at $t=0$. Near its minimum, $h(t)=t^2/2+O(|t|^3)$.
In fact, $h(t)= \half t^2\psi(t)$ where $\psi$ is a decreasing 
function with $\psi(0)=1$ and $\psi'(0)=-1/3$. See \citeN[page~312]{PollardUGMTP} for a
simple derivation of these facts.  

Define $\NN_0=\{0,1,2,\dots\}$, the set of all nonnegative integers.

\Lemma  \label{Poisson.dist}
Suppose $W$ has a $\Poisson(\lam)$ distribution, with $\lam\ge1$.  
\begin{enumerate}
\ritem(i)
If $\ell=\lam+ \lam u\in\NN_0$ then
\begin{align*}
\sqrt{2\pi\lam}\PP\{W=\ell\} 
&= \exp\left(\strut -\lam h(u) -\tfrac12\log(1+u) +O(1/\ell)\right)\\
&= \exp\left(\strut -\tfrac12\lam{u^2}+O\left(|u|+\lam|u|^3\right) \right)  .
\end{align*}

\ritem(ii)
$\PP\{W=\ell\}\le \exp(-\lam h(u))$ for all
$\ell=\lam(1+u)\in\NN_0$.

\ritem(iii)
For all $w\ge 0$,
\begin{align*}
\PP\{|W-\lam|\ge \lam w\} &\le 2\exp(-\lam h(w))
= 2\exp\left(\strut -\tfrac12\lam{w^2}
+O\left(\strut \lam|w|^3\right) \right)
\end{align*}
\end{enumerate}
\endLemma
\Proof 
By Stirling's formula, 
$$
\log(\ell!/  \sqrt{2\pi}) =(\ell+\tfrac12)\log(\ell)-\ell+r_\ell
\qt{where }\frac1{12\ell+1}\le r_\ell\le \frac1{12\ell}.
$$
Thus
\begin{align*}
\log\bigl(\strut \sqrt{2\pi\lam}\PP\{W=\ell\}\bigr)
&= -\lam +\ell\log(\lam) - \log(\ell!/ \sqrt{2\pi})+\tfrac12\log(\lam)\\ 
&= -\lam h(u)  -\half\log(1+u)+O(\ell^{-1}),
\end{align*}
which gives~(i).

For~(ii), first note that $\PP\{W=0\}=e^{-\lam} =\exp(-\lam h(-1))$.  For $\ell\ge1$ we have
\begin{align*}
\log\left(\strut \sqrt{2\pi}\PP\{W=\ell\}\right) 
&= -\lam +\ell\log(\lam)-(\ell+\tfrac12)\log(\ell)+\ell-r_\ell\\
&\le -\lam +\ell\log(\lam)-\ell\log(\ell)+\ell  =\lam h(u).
\end{align*}

Inequality~(iii) comes from  two appeals to the  usual trick with the moment
generating function
$\PP e^{tW}=
\exp(\lam(e^t-1))$. For $w\ge0$,
$$
\PP\{W\ge\lam+\lam w\} \le \inf_{t\ge0}\PP e^{t(W-\lam-\lam w)}
=\inf_{t\ge0}\exp\left(\strut -t\lam(1+w) +\lam(e^t-1)\right)
$$
The infimum is achieved at  $t=\log(1+w)$, giving the bound~$\exp(-\lam h(w))$. 
Similarly
$$
\PP\{W\le\lam-\lam w\} \le \inf_{t\ge0}\PP e^{t(\lam-\lam w-W)}
=\inf_{t\ge0}\exp\left(\strut t(\lam-\lam w) +\lam(e^{-t}-1)\right)
$$
with the infimum achieved at $t= -\log(1-w)$ if $0\le w<1$ or as $t\to\infty$
if~$w=1$. The inequality is trivial for $w>1$.
\endProof

\section{Heuristics and an outline of the proof of~(\ref{main.inequality})}    \label{heuristics}
We first show that~$A_n(c)$ is almost a conditional
expectation involving a set of
independent random variables,  $Y=\{Y_{ij}:1\le i, j,\le
k\}$, each distributed Poisson$(\lam_{ij})$ with $\lam_{ij}=n/k^2$ for all $i,j$.  For
$\ell\in\hh_k$,
$$
p(\ell) :=\PP\{Y=\ell\} = \frac{e^{-n}(n/k^2)^n}{ \prod_{i,j}\ell_{ij}!}
= \frac{n!}{\prod_{i,j}\ell_{ij}!}\frac{n^{n}e^{-n}}{n!}k^{-2n}
$$
The standardized variables $X_{ij} :=(Y_{ij}-\lam_{ij})/\sqrt{\lam_{ij}}$ are approximately
independent standard normals.

As we  show in Section~\ref{condit.poisson}, the quantity
$$
\b_{n} := n^{(2k-1)/2}\PP\{Y\in \hh_k\} 
$$
converges to a strictly positive constant as $n$ tends to infinity.
Thus
\begin{align*}
p_2(\ell) := \PP\{Y=\ell\mid Y\in\hh_k\} &= p(\ell)/\PP\{Y\in \hh_k\}\\
&=
n^{k-1}k^{-2n}\dfrac{n!}{\prod_{i,j}\ell_{ij}!}
\dfrac{n^{n+1/2}e^{-n}}{ n!\,\b_n}   .
\end{align*}
By Stirling's approximation, the final fraction converges to a nonzero constant.
The quantity $A_n(c)$ is bounded by a constant multiple of 
\begin{equation}
\left(\strut 1-\frac1k\right)^{-2nc}\sum\NL_{\ell\in\hh_k}
p_2(\ell)\left(\strut 1-\frac2k +\sum\NL_{i,j}(\ell_{ij}/n)^2\right)^{nc}    .
\label{bound1}
\end{equation}
That is, for some constant~$C_0$,
$$
A_n(c) \le C_0 \left(\strut 1-\frac1k\right)^{-2nc}\PP_2\left(\strut 1-\frac2k+\sum\NL_{i,j}(Y_{ij}/n)^2\right)^{nc} ,
$$
where $\PP_2(\cdot)$ denotes expectations with respect to  the conditional probability distribution
$\PP(\cdot\mid Y\in\hh_k)$.   

Note the similarily to  the usual chi-squared goodness-of-fit
statistic,
$$
|X|^2 := 
\sumnl_{i,j}X_{ij}^2 = -n+nk^2\sum\NL_{i,j}(Y_{ij}/n)^2  .
$$
The quantity in~\cref{bound1} equals the~$\PP_2$ expectation of
$$
\left(\strut 1-\frac1k\right)^{-2nc} 
\left(\strut \left(\strut 1-\frac1k\right)^2 +\frac{|X|^2 }{nk^2}\right)^{nc}
\le \exp\left(\strut \frac{c|X|^2 }{(k-1)^2}\right).
$$
Our task has become: for a fixed $J_k := c/(k-1)^2 <\rho_k :=\log(k-1)/(k-1)$, show that 
\begin{equation}
\PP_2\exp\left(\strut {J_k|X|^2 }\right)  =O(1)
\qt{as $n\to \infty$. }
\label{main.task}
\end{equation}

Under~$\PP_2$, the random vector~$X$ has a limiting normal distribution~$\nn$ that concentrates on a~$(k-1)^2$-dimensional subspace of~$\RR^{k\times k}$.  The random variable~$|X|^2$ 
has an asymptotic $\chi^2_{R}$ distribution
with $R= (k-1)^2$.
If we could assume that $|X|^2$ were exactly $\chi^2_{R}$-distributed,
we could bound the conditional expectation  in~(\ref{main.task}) by a constant times
$$
\int_0^\infty t^{R/2-1}\exp\left(\strut ct/R -t/2)\right)\,dt  ,
$$
which would be finite  for $c<R/2=(k-1)^2/2$.

To make the argument rigorous we will need to consider the contributions from
the large~$|Y_{ij}-n/k^2|$'s more carefully.  As a special case of Theorem~\ref{normal.bound} in Section~\ref{condit.poisson}, we know that for each fixed~$\theta>1$
there exists a $\del=\del_\th$ for which
\begin{equation}
\PP_2 \exp\left(\strut J_k|X|^2\right)\{|X|\le \del\sqrt n\} \le \th\nn\exp(\th^2
J_k|x|^2).
\label{chi2rigor}
\end{equation}
The expectation with respect to the normal distribution~$\nn$ can be bounded as in the previous paragraph because $|x|^2\sim\chi^2_{(k-1)^2}$ under~$\nn$.

To control the contribution from $\{|X|>\del\sqrt n\}$ it is notationally cleaner to work
with the variables $U_{ij}:= (Y_{ij}-\lam_{ij}) /\lam_{ij}$, that is, $U=kX/\sqrt n$. 
Write $\uu$ for the set of all $u$ in~$\RR^{k\times k}$ for which
$\lam_{ij}(1+u_{ij})\in\NN_0$ for all~$i,j$ and (because~$Y$ is constrained to 
lie in~$\hh_k$),
\begin{equation}
-1\le u_{ij}\le k-1\AND/
\sumnl_i u_{ij} =0 = \sumnl_j u_{ij}   .
\label{u.constraints}
\end{equation}
 We need to bound
\begin{align*}
&\PP \exp\left(\strut nJ_k|U|^2/k^2\right)\{|U|> k\del\}/\PP\{Y\in \hh_k\}  \\
&\qquad= O(n^{(2k-1)/2})\sum\NL_{u\in\uu}\{|u|>k\del\}\PP\{U= u\}\exp(nJ_k|u|^2/k^2)\\
\end{align*}
From Lemma~\ref{Poisson.dist},
$$
\PP\{U=u\} \le \prod\NL_{ij}\exp (-nh(u_{ij})/k^2) ,
$$
which leads us to the task of showing that
\begin{equation}
\sum\NL_{u\in\uu}\{|u|>k\del\}\exp\left(\strut\frac n{k^2}
 \sum\NL_{ij}\left(\strut J_ku_{ij}^2-h(u_{ij})\right)\right) = O(n^{-(2k-1)/2}) .
\label{task}
\end{equation}
Here we can make use of an inequality (proved in Section~\ref{delicate}) that controls the exponent in~(\ref{task}). Recall that $h(t)= (1+t)\log(1+t)-t$ and $\rho_k= \log(k-1)/(k-1)$.

\Lemma  \label{h.ineq}
For each $u=(u_1,\dots,u_k)\in\RR^k$ for which $\sum_j u_j=0$ and $-1\le u_j\le k-1$ for
all~$j$, we have $\sum_j h(u_j)\ge \rho_k \sum_j u_j^2$.
\endLemma

When invoked for the sum over~$j$ for each fixed~$i$, the Lemma bounds~\cref{task} by
$$
\sum\NL_{u\in\uu}\{|u|>k\del\}\exp\left(-n\eps_0|u|^2\right) 
\qt{where $\eps_0 := (\rho_k -J_k)/k^2>0$.}
$$
The set $\{u\in\uu: 2^b k\del< |u| \le 2^{b+1} k\del\}$ has cardinality of order
$O((n2^b)^{k^2})$.  The last sum is less than
$$
O(n^{k^2}) \sum\NL_{b\in\NN_0}\exp\left(\strut k^2 b - n\eps_04^b\right)
$$
which decreases exponentially fast with~$n$.

The  bound asserted in~\cref{main.task} follows.

\section{Limit theory for conditioned Poisson distributions} \label{condit.poisson}
The main result in this Section is Theorem~\ref{normal.bound},  which shows that the contributions to the left-hand side
of~\cref{main.task} from a large range of~$X$ values can actually be bounded using the
$\chi^2$-approximation.

Suppose $\bY=(Y_1,\dots,Y_q)$ is a vector of independent random variables with
$Y_i$ distributed~$\Poisson(\lam_i)$. 
Define
$$
\lam :=(\lam_1,\dots,\lam_q) \AND/ D :=\diag(\rootlambda_1,\dots,\rootlambda_q).
$$
 For the rest of this section assume that~$\nu:=\sum_i\lam_i$ converges to infinity and that 
there exists some fixed constant~$\tau>0$ for which
\begin{equation}
\nu \ge \max\NL \lam_i \ge 
\min\NL_i \lam_i \ge \tau \nu.
\label{min.llam}
\end{equation}
The various constants that appear throughout the section might depend on~$\tau$.

Suppose $\bV_1,\dots,\bV_s$ are fixed vectors in~$\ZZ^q$ that are linearly
independent, spanning a subspace~$\ll$ 
of~$\RR^q$.  The linear independence implies the existence of nonzero constants~$C_1$
and~$C_2$ for which
\begin{equation}
C_1\max\NL_\a|t_\a| \le |\sum\NL_\a t_\a V_\a|\le C_2 \max\NL_\a|t_\a|\qt{for all
$t_\a\in\RR$}.
\label{max.ta}
\end{equation}
We also assume that
\begin{equation}
\ZZ^q\cap(\lambdab\oplus\ll)\ne\emptyset  .
\label{nonempty}
\end{equation}

Under similar assumptions,  
\citeN[Chapter~1]{Haberman74book} proved a central limit theorem for the random vector  
$X:=D^{-1}(Y-\lam)$ conditional on the event
$\{Y\in \lambda\oplus\ll\}$.
The limit distribution~$\nn_\lam$ is that of a~$N(0,I_q)$
conditioned to lie in the $s$-dimensional subspace~$D^{-1}\ll$.  More
precisely,~$\nn_\lam$
has density
$\phi(x)=(2\pi)^{-s/2}\exp(-\norm{x}^2/2)$
with respect to Lebesgue measure $\Leb_\lam$ on the subspace
$D^{-1}\ll$. 

We will write $\QQ(\cdot)$ to denote expectations under~$\PP(\cdot\mid Y\in\lam\oplus
\ll\}$.  That is, for the conditional expectation of a function of~$Y$,
$$
\QQ f(Y) = \frac{\PP f(Y)\{Y\in\lam\oplus\ll\} }{\PP \{Y\in\lam\oplus\ll\} }  .
$$

For the calculations leading to inequality~(\ref{chi2rigor}), the  $q\times1$
vectors  are  more naturally written as $k\times k$ tables.  The vector of means becomes a
table $\lam=\{\lam_{ij}: 1\le i,j\le k\}$ with $\lam_{ij}=n/k^2$ for all $i,j$. The
constraints on row and column sums can be written using the $2k$ tables with ones in a
single row or column, zeros elsewhere, but only $2k-1$ of those tables are linearly
independent.  Thus $q=k^2$ and~$s=k^2-(2k-1)=(k-1)^2$ and $\nu=n$.  The~$\QQ$ in this Section  corresponds to the~$\PP_2$ from Section~\ref{heuristics}.

For each $w\in\ZZ^s$ define $\bz_w :=\sum_{\a\le s}w_\a\bV_\a$, a point
of~$\ZZ^q$. The key idea in Haberman's argument is that the space $\ll$ is
partitioned into disjoint boxes
$$
B_w :=\{\sumnl_{i\le s}t_i\bV_i: \lfloor t_i\rfloor =w_i\}=\bz_w\oplus B_0\qt{for
$w\in\ZZ^s$},
$$
each containing the same number,~$\kappa_V$, of lattice points from~$\ZZ^q$.
Assumption~\cref{nonempty} ensures that
$\kappa_V>0$.

\begin{theorem}  \label{normal.bound}
Suppose $g$ is a uniformly continuous, increasing function. Then for each $\th>1$ there
exists a $\del>0$ and a subset~$\ll_\del$ of~$\ll$ for which
\begin{enumerate}
\ritem(i) $\{x\in D^{-1}\ll: |x|\le \del\sqrt\nu \}\subseteq D^{-1}\ll_\del $
\ritem(ii)
$
\QQ \exp\left(\strut g(|X|^2)\right)\{X\in D^{-1}\ll_\del\} \le \th \nn_\lam
\exp\left(\strut g(\th^2|x|^2)\right)\{x\in D^{-1}\ll_\del/\th\} 
$
\end{enumerate}
\end{theorem}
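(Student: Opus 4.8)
The plan is to realize $\QQ$ as a sum over the affine lattice $\ZZ^q\cap(\lam\oplus\ll)$, to replace each Poisson probability by its Gaussian counterpart via Lemma~\ref{Poisson.dist}(i), and to compare the resulting lattice sum with the integral $\nn_\lam(\cdot)$ using Haberman's box decomposition $\ll=\bigcup_w B_w$. Write $u_{y,i}=(y_i-\lam_i)/\lam_i$ and $x_y=D^{-1}(y-\lam)$, so that $|x_y|^2=\sum_i\lam_i u_{y,i}^2$. I would take $\ll_\del$ to be the union of those boxes $B_w$ every one of whose $\ZZ^q$-points $y$ has $|u_{y,i}|\le\eta$ for all $i$, where $\eta$ is a small constant depending only on $\th$ (and on $g,\tau,s$) and $\del:=\tau\eta/2$. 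Since $B_0$ has Euclidean diameter at most the fixed constant $D_V:=\sum_\a|\bV_\a|$ while $D^{-1}$ contracts lengths by at least $(\tau\nu)^{-1/2}$, each $D^{-1}B_w$ has diameter $O(\nu^{-1/2})$; hence if $x=D^{-1}v$ with $|x|\le\del\sqrt\nu$ and $v\in B_w$, then every $\ZZ^q$-point $y$ of $B_w$ has $|u_{y,i}|=|y_i-\lam_i|/\lam_i\le(\del\nu+D_V)/(\tau\nu)\le2\del/\tau=\eta$ once $\nu$ is large, which gives part~(i). The same estimate, applied to arbitrary $v\in\ll_\del$, shows that $D^{-1}\ll_\del$ sits inside a ball of radius $O(\sqrt\nu)$ --- a fact used crucially below.

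First, two local estimates. Note $q\le1/\tau$ (because $\nu=\sum_i\lam_i\ge q\tau\nu$), so on $\ll_\del$ the bound $|u_{y,i}|\le\eta<1$ keeps all $y_i\ge\tau\nu(1-\eta)$ and the $O(1/\ell_i)$ errors in Lemma~\ref{Poisson.dist}(i) sum to $O(1/\nu)$. Multiplying the $q$ factors and using $h(t)=\tfrac12 t^2\psi(t)$ with $\psi$ decreasing and $\psi(0)=1$ gives, uniformly over $y\in\ZZ^q\cap(\lam\oplus\ll_\del)$,
$$\PP\{Y=y\}\le p_0\exp\bigl(-\tfrac12\psi(\eta)\,|x_y|^2+\gamma(\eta)+o(1)\bigr),\qquad p_0:=\prod\NL_i(2\pi\lam_i)^{-1/2},$$
where $\gamma(\eta)$ majorises $-\tfrac12\sum_i\log(1+u_{y,i})$ and $0\le\gamma(\eta)\le\eta(1+\eta)/(2\tau)\to0$ as $\eta\to0$ (using $-\log(1+t)\le-t+t^2$ for $|t|\le\tfrac12$). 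For the normalizing constant one shows $\PP\{Y\in\lam\oplus\ll\}\ge p_0\,\Delta^{-1}(2\pi)^{s/2}(1-o(1))$, where $\Delta:=\Leb_\lam(D^{-1}B_0)/\kappa_V$ is the covolume in $D^{-1}\ll$ of the translated lattice on which the $x_y$ live; this follows from the same local approximation restricted to $|x_y|\le M$, the Riemann-sum relation $\sum_{|x_y|\le M}e^{-|x_y|^2/2}=\Delta^{-1}\int_{|x|\le M}e^{-|x|^2/2}\,\Leb_\lam(dx)\,(1+o(1))$, and $M\to\infty$ slowly --- it is the lower-bound half of the statement that $\nu^{(q-s)/2}\PP\{Y\in\lam\oplus\ll\}$ has a strictly positive limit.

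Next the passage from lattice sum to integral. On a box $B_w\subseteq\ll_\del$ the value $|x|^2$ oscillates by at most $2\,\diam(D^{-1}B_w)\,\sup_{x\in D^{-1}\ll_\del}|x|\le C\eta$ for a fixed $C$ and all large $\nu$ --- the $O(\nu^{-1/2})$ diameter is cancelled by the $O(\sqrt\nu)$ size of $D^{-1}\ll_\del$ --- so, by uniform continuity of $g$, the integrand $G(x):=\exp\bigl(g(|x|^2)-\tfrac12\psi(\eta)|x|^2\bigr)$ varies over each box by at most a factor $e^{\eps(\eta)}$ with $\eps(\eta)\to0$ as $\eta\to0$. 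Since each box holds exactly $\kappa_V$ lattice points and has $\Leb_\lam$-measure $\kappa_V\Delta$,
$$\sum\NL_{y:\,x_y\in D^{-1}\ll_\del}G(x_y)\le\sum\NL_{B_w\subseteq\ll_\del}\kappa_V\max_{y\in B_w}G(x_y)\le\frac{e^{\eps(\eta)}}{\Delta}\int_{D^{-1}\ll_\del}G(x)\,\Leb_\lam(dx).$$
Combining this with the two estimates of the previous paragraph yields
$$\QQ\exp\bigl(g(|X|^2)\bigr)\{X\in D^{-1}\ll_\del\}\le e^{\gamma(\eta)+\eps(\eta)}(1+o(1))\int_{D^{-1}\ll_\del}(2\pi)^{-s/2}e^{g(|x|^2)-\psi(\eta)|x|^2/2}\,\Leb_\lam(dx).$$

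The hard part is the remaining comparison with the right side of~(ii). Substituting $x=z/\th$ (after choosing $\eta$ so small that $\psi(\eta)\ge\th^{-2}$, possible since $\psi(\eta)\to1$) turns that right side into $\th^{1-s}\int_{D^{-1}\ll_\del}(2\pi)^{-s/2}e^{g(|z|^2)-\th^{-2}|z|^2/2}\,\Leb_\lam(dz)$, so the theorem reduces to controlling the ratio of two integrals carrying the \emph{same} factor $e^{g(|x|^2)}$ and differing only in the Gaussian weight. The naive pointwise bound $e^{-\psi(\eta)|x|^2/2}\le e^{-\th^{-2}|x|^2/2}$ alone wastes a factor $\th^{s}$, which is fatal when $s=(k-1)^2$ is large; the trick I would use is to write the ratio as $\int\exp\bigl(-\tfrac12(\psi(\eta)-\th^{-2})|x|^2\bigr)\,d\bar m$, where $\bar m$ is the probability measure on $D^{-1}\ll_\del$ with density proportional to $e^{g(|x|^2)-\th^{-2}|x|^2/2}$, and invoke a Chebyshev (negative-correlation) inequality: $e^{g(|x|^2)}$ is nondecreasing and $\exp(-\tfrac12(\psi(\eta)-\th^{-2})|x|^2)$ nonincreasing in $|x|^2$, so the last integral is at most the corresponding average against the density proportional to $e^{-\th^{-2}|x|^2/2}$, i.e.\ at most $\int_{D^{-1}\ll_\del}e^{-\psi(\eta)|x|^2/2}/\int_{D^{-1}\ll_\del}e^{-\th^{-2}|x|^2/2}$. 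Because $D^{-1}\ll_\del$ contains the ball of radius $\del\sqrt\nu\to\infty$ yet lies inside a ball of radius $O(\sqrt\nu)$, this last ratio tends to $(\th^{-2}/\psi(\eta))^{s/2}=\th^{-s}\psi(\eta)^{-s/2}$. Putting the pieces together, the left side of~(ii) is at most $e^{\gamma(\eta)+\eps(\eta)}\psi(\eta)^{-s/2}\,\th^{-1}(1+o(1))$ times its right side; since $e^{\gamma(\eta)+\eps(\eta)}\psi(\eta)^{-s/2}\to1$ as $\eta\to0$, fixing $\eta$ (hence $\del=\del_\th$) small enough makes this prefactor lie strictly below $1$ for all large $\nu$, which is~(ii). (If the right side of~(ii) is infinite there is nothing to prove.)
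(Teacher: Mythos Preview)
Your argument is correct, but it reaches the conclusion by a genuinely different route from the paper's. The paper's key device is a two-sided \emph{dilated} Gaussian bound at the pointwise level (Lemma~\ref{pointwise}):
\[
\th^{-1}\phi(\th x)\le \nu^{q/2}\PP\{Y=\ell\}/\gam(\lam)\le \th\,\phi(x/\th),
\]
in which the error in the normal approximation is absorbed into the \emph{argument} of~$\phi$ rather than into the exponent's coefficient. Summed over a box and divided by the matching lower bound for $\PP\{Y\in\lam\oplus\ll\}$, this yields $\QQ\{Y\in\lam\oplus B_w\}\le\th\,\nn_\lam(D^{-1}B_w/\th)$ (Corollary~\ref{condit.normal.approx}). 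The shrunken box $D^{-1}B_w/\th$ is precisely what appears after the substitution $x\mapsto x/\th$ implicit in the $g(\th^2|x|^2)$ on the right of~(ii), so the theorem follows in three lines: bound $f(X)=e^{g(|X|^2)}$ on each box by $e^{\eps}f(x_w)$, use monotonicity of~$g$ to pass to $f(\th x_w)$, and use the oscillation bound once more to return to $\nn_\lam f(\th x)\{x\in D^{-1}B_w/\th\}$. No Jacobian bookkeeping and no auxiliary inequality are needed; the~$\th$ in the statement does all the work.

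You instead bound $\PP\{Y=y\}$ by a Gaussian with coefficient $\psi(\eta)$ in the exponent and only afterwards try to match it against the right side, which (after your substitution $x=z/\th$) carries coefficient $\th^{-2}$; the naive pointwise comparison then wastes~$\th^{s}$, and you repair this with the Chebyshev/Harris correlation inequality on the real line applied to the oppositely monotone functions $r\mapsto e^{g(r)}$ and $r\mapsto e^{-(\psi(\eta)-\th^{-2})r/2}$ of $r=|x|^2$. That repair is valid, and your remaining estimates (box oscillation of $|x|^2$ bounded by $C\eta$, the liminf for the normalising constant, the limit $(\th^{-2}/\psi(\eta))^{s/2}$ for the ratio of truncated Gaussians) are all in order. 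The trade-off is that your route introduces an extra nontrivial inequality and an asymptotic evaluation of Gaussian integrals, whereas the paper's dilation trick bakes the answer into Lemma~\ref{pointwise} from the outset and keeps everything at the level of individual boxes.
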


The proof of the Theorem will be given at the end of this Section, as the culmination of a
sequence of lemmas based on the elementary facts from Section~\ref{poisson.facts}.
We first show that most of  the contributions to the $\PP_2$ and~$\nn_\lam$ probabilities come from a
large, bounded subset of~$\ll$.

\begin{lemma}  \label{tails}
For each $\del>0$ define $\ww_\del :=\{w\in\ZZ^s: \max_\a|w_\a|\le \del \nu\}$ and $\ll_\del
:= \cup_{w\in\ww_\del} B_w$. There exists a constant~$C_\del>0$ for which
$$
\PP\{Y\notin \lam\oplus \ll_\del\} +\nn_\lam(D^{-1}\ll_\del^c)\} = O(e^{-C_\del\nu}) .
$$
\end{lemma}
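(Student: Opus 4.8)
The plan is to bound the two terms separately by exponential-in-$\nu$ tail estimates, using only the elementary Poisson facts of Lemma~\ref{Poisson.dist} together with the geometric control~\cref{max.ta} on the lattice vectors. For the probabilistic term, note that $Y\notin\lam\oplus\ll_\del$ means that $X=D^{-1}(Y-\lam)$ lies in $\ll$ but outside $\cup_{w\in\ww_\del}B_w$; by~\cref{max.ta} this forces $\max_i|X_i|$, hence $\max_i|Y_i-\lam_i|$, to be of order at least $\del'\nu$ for some $\del'>0$ depending on $\del$ and the constants $C_1,C_2$. Writing $Y_i-\lam_i=\lam_i w_i$ with $|w_i|\ge \del'\nu/\lam_i\ge \del'$ (using $\lam_i\le\nu$ from~\cref{min.llam}), part~(iii) of Lemma~\ref{Poisson.dist} gives $\PP\{|Y_i-\lam_i|\ge\lam_i|w_i|\}\le 2\exp(-\lam_i h(|w_i|))$. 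Since $h$ is increasing on $[0,\infty)$ and $\lam_i\ge\tau\nu$, and since $w_i$ is bounded below by a constant while $\lam_i w_i=Y_i-\lam_i$ is of size $\gtrsim\nu$, a short case split (bounded $w_i$ versus large $w_i$) shows $\lam_i h(|w_i|)\ge C\nu$ for a constant $C=C_\del>0$. A union bound over the $q$ coordinates then yields $\PP\{Y\notin\lam\oplus\ll_\del\}=O(q e^{-C_\del\nu})=O(e^{-C_\del'\nu})$.

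For the normal term, $\nn_\lam(D^{-1}\ll_\del^c)$ is the $N(0,I_q)$ mass, restricted to the subspace $D^{-1}\ll$, of the set where the coefficients $t_\a$ in the expansion $x=D^{-1}\sum_\a t_\a V_\a$ satisfy $\max_\a|t_\a|>\del\nu$. By~\cref{max.ta}, on this set $|\sum_\a t_\a V_\a|\ge C_1\del\nu$, and since each coordinate of $D^{-1}$ scales by $\lam_i^{-1/2}\ge\nu^{-1/2}$... more carefully, $|x|^2=\sum_i(\sum_\a t_\a V_{\a,i})^2/\lam_i\ge \nu^{-1}|\sum_\a t_\a V_\a|^2\ge C_1^2\del^2\nu$. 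So $D^{-1}\ll_\del^c\cap D^{-1}\ll$ is contained in $\{x:|x|^2\ge C_1^2\del^2\nu\}$, and a standard Gaussian tail bound in the $s$-dimensional subspace gives $\nn_\lam(D^{-1}\ll_\del^c)\le C\exp(-C_1^2\del^2\nu/4)$ (say), again of the form $O(e^{-C_\del\nu})$. Combining the two estimates with a common constant proves the Lemma.

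The main obstacle is the first (probabilistic) term, and within it the bookkeeping that converts "$\max_i|Y_i-\lam_i|\gtrsim\nu$" into "$\sum_i\lam_i h(u_i)\gtrsim\nu$" uniformly. The subtlety is that $h(w)$ grows only like $w\log w$ for large $w$, so one cannot simply say $h(|w_i|)$ is bounded below by a constant; instead one must exploit that when $|w_i|$ is large the prefactor $\lam_i$ need not be large, but the product $\lam_i h(|w_i|)\ge \lam_i|w_i|\log(|w_i|)\gtrsim |Y_i-\lam_i|\cdot\log(\cdots)\gtrsim\nu$ still wins, while when $|w_i|$ stays bounded we instead use $\lam_i\ge\tau\nu$ and $h(|w_i|)\ge h(\del')>0$. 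Making this dichotomy clean, and tracking how $\del'$ depends on $\del$ through~\cref{max.ta}, is where the care is needed; everything else is a routine union bound and Gaussian tail estimate.
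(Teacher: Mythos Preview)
Your approach is correct and matches the paper's: reduce via~\cref{max.ta} to $\max_i|Y_i-\lam_i|\ge\del_0\nu$ for a suitable $\del_0>0$, then apply Lemma~\ref{Poisson.dist}(iii) coordinatewise and a union bound for the Poisson term, and a standard Gaussian tail bound on $\{|x|\gtrsim\sqrt\nu\}$ for the $\nn_\lam$ term. The ``main obstacle'' you worry about is not actually there: since $\lam_i\le\nu$, the event $\{|Y_i-\lam_i|>\del_0\nu\}$ is contained in $\{|Y_i-\lam_i|>\del_0\lam_i\}$, and Lemma~\ref{Poisson.dist}(iii) applied at the \emph{fixed} threshold $w=\del_0$ gives the bound $2\exp(-\lam_i h(\del_0))\le 2\exp(-\tau\nu\, h(\del_0))$ directly, with no case split on the size of $w_i$ needed.
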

\begin{proof}
If $y\in\lam\oplus(\ll\bsl\ll_\del)$ then $y-\lam \in \bz_w\oplus B_0$ for some~$w$  with~$\max_\a|w_\a|>\del\nu$, which implies
$$
\sqrt k \max_i|y_i-\lam_i| \ge  |y-\lam| \ge |\bz_w|-\diam(B_0) \ge C_1\del\nu - C_4.
$$
Define $\del_0:= C_1\del/(2\sqrt k)$.
When $\nu$ is  large enough we have $(C_1\del\nu - C_4)/\sqrt k > \del_0\nu \ge \del_0\max_i\lam_i$, so that
$$
\PP\{Y\notin \lam\oplus\ll_\del\} \le \sum\NL_i\PP\{|Y_i-\lam_i|>\del_0\lam_i\} .
$$
Invoke Lemma~\ref{Poisson.dist} to bound
the $i$th summand by $2\exp\left(\strut -\lam_i\del_0^2/2 +O(\del_0^3\lam_i)\right)$.
With a possible decrease in~$\del_0$ we can ensure that the $\lam_i\del_0^2/2$ is at least
twice the other contribution to  the exponent.

Similarly, if $\bx\in D^{-1}(\ll\bsl\ll_\del)$ and $\nu$ is large enough then~$|\bx|
>\del_0\sqrt\nu$ and the  contribution from~$\nn_\lam$ is bounded by a sum of tail
probabilities for the standard normal.
\end{proof}

Next we use Lemma~\ref{Poisson.dist} to get good pointwise approximations for 
$\PP\{Y=\ell\}$ when~$|\ell-\lam|$ is not too large.

\begin{lemma}
\label{pointwise}
For each $\th>1$ there exists a $\del>0$ such that, for all~$\ell=\lam+D x$
in~$\NN_0^q$   for which
$\max_i\lam_i^{-1}|\ell_i-\lam_i|\le \del$,
$$
\th^{-1}\phi\left(\strut
\th \bx\right)\le
\nu^{q/2}\PP\{\bY= \ell\}/\gam(\lam)\\
\le \th \phi\left(\strut
\bx/\th\right)
$$
where  $\gam(\lam) :=
(2\pi)^{s/2}\prod\NL_i(2\pi \lam_i/\nu)^{-1/2}$, a factor that stays bounded away from zero
and infinity as~$\nu\to\infty$.
\end{lemma}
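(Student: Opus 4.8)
The plan is to obtain the two-sided bound directly from the coordinatewise local expansion of Lemma~\ref{Poisson.dist}(i), using independence of the $Y_i$, and to arrange the claimed factors $\th$ and $\th^{-1}$ by making the multiplicative distortion coming from the function $\psi$ (recall $h(t)=\half t^2\psi(t)$) lie between $\th^{-2}$ and $\th^2$ on the relevant range of $u$. The one structural point worth flagging at the outset is that the sandwich~\cref{min.llam}, $\tau\nu\le\lam_i\le\nu$, forces $q\le 1/\tau$, so that only a bounded number of coordinate errors are ever summed; that is what keeps every remainder under control.

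First I would unwind the constants. Independence gives $\PP\{\bY=\ell\}=\prod_i\PP\{Y_i=\ell_i\}$, and cancelling the $\nu^{q/2}$ against the $\prod_i(2\pi\lam_i/\nu)^{-1/2}$ hidden in $\gam(\lam)$ shows
$$
\nu^{q/2}\PP\{\bY=\ell\}/\gam(\lam)=(2\pi)^{-s/2}\prod_i\Bigl(\sqrt{2\pi\lam_i}\,\PP\{Y_i=\ell_i\}\Bigr).
$$
(The assertion that $\gam(\lam)$ stays bounded away from $0$ and $\infty$ is immediate from~\cref{min.llam}: each factor $(2\pi\lam_i/\nu)^{-1/2}$ lies in $[(2\pi)^{-1/2},(2\pi\tau)^{-1/2}]$ and there are at most $1/\tau$ of them.) Writing $u_i:=\lam_i^{-1}(\ell_i-\lam_i)$, so that $x_i=\lam_i^{1/2}u_i$ and the hypothesis reads $\max_i|u_i|\le\del$, Lemma~\ref{Poisson.dist}(i) together with $h(u_i)=\half u_i^2\psi(u_i)$ gives
$$
\log\Bigl(\nu^{q/2}\PP\{\bY=\ell\}/\gam(\lam)\Bigr)=-\tfrac{s}{2}\log(2\pi)-\half\sum_i x_i^2\psi(u_i)+E,
$$
where $E:=-\half\sum_i\log(1+u_i)+\sum_i O(1/\ell_i)$. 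Since $|\log(1+u_i)|\le 2\del$ for $\del\le\half$, since $\ell_i\ge\lam_i(1-\del)\ge\tau(1-\del)\nu$ (so $\sum_iO(1/\ell_i)=O(1/\nu)$), and since there are at most $1/\tau$ terms, I get $|E|\le\del/\tau+O(1/\nu)$ \emph{uniformly} over all admissible $\ell$; in particular $E$ carries no dependence on $|x|$.

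To finish I would compare with $\phi(x)=(2\pi)^{-s/2}\exp(-|x|^2/2)$: taking logarithms, the asserted inequality is equivalent to
$$
-\log\th-\tfrac{\th^2}{2}|x|^2\ \le\ -\half\sum_i x_i^2\psi(u_i)+E\ \le\ \log\th-\tfrac1{2\th^2}|x|^2.
$$
Given $\th>1$, continuity of $\psi$ at $0$ with $\psi(0)=1$ lets me choose $\del>0$ small enough that $\th^{-2}\le\psi(\del)$ and $\psi(-\del)\le\th^2$; because $\psi$ is decreasing, $\psi(\del)\le\psi(u_i)\le\psi(-\del)$ for each $i$, hence $\th^{-2}|x|^2\le\sum_i x_i^2\psi(u_i)\le\th^2|x|^2$, which matches the quadratic terms on both sides. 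Shrinking $\del$ a little further and then requiring $\nu$ to be large makes $|E|\le\log\th$, which absorbs the additive terms, and the proof is complete.

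I expect the only genuinely delicate point to be the uniformity of $E$. The range of $\ell$ allowed in the statement permits $|x|$ to be as large as order $\del\sqrt\nu$, so it is essential that the corrections $\half\sum_i\log(1+u_i)$ and $\sum_iO(1/\ell_i)$ be bounded by quantities that do \emph{not} grow with $|x|$; both bounds rest on $|u_i|\le\del$ together with the lower bound $\ell_i\ge\tau(1-\del)\nu$ and on the boundedness of $q$, which is exactly where hypothesis~\cref{min.llam} does the real work.
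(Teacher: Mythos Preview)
Your proof is correct and follows the same idea as the paper: apply Lemma~\ref{Poisson.dist}(i) coordinate by coordinate and absorb the remainder into the $\th$-factors. The only cosmetic difference is that the paper uses the second form of Lemma~\ref{Poisson.dist}(i), writing each error as $r_i$ with $|r_i|\le C_3(|x_i|+|x_i|^3)/\sqrt\nu\le C_3\del(1+x_i^2)$, so the quadratic distortion is folded into the remainder rather than carried explicitly as the $\psi(u_i)$ factor; your explicit use of $q\le 1/\tau$ to check boundedness of $\gam(\lam)$ is a detail the paper leaves implicit.
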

\begin{proof}
From Lemma~\ref{Poisson.dist},
$$
\left(\strut \prod\NL_i\sqrt{2\pi\lam_i}\right)\PP\{\bY= \ell\}  
= \prod\NL_i \exp\left(\strut -\tfrac12
x_i^2+r_i\right)
$$
where, for some constant~$C_3$,
$$
|r_i|\le C_3(|x_i|+|x_i|^3)/\sqrt\nu \le C_3\del(1+x_i^2)  .
$$
The asserted inequalities follow if $\del$ is small enough.
\end{proof}

Next we sum over the pointwise approximations to get bounds for the probability that~$Y$
lies in one of the boxes that partition~$\lam\oplus\ll$.  The sum for the box~$\lam\oplus
B_w$ will run over the lattice points of the form $\lam+D x$ with $x$ in the set
$$
\xx_w =\{x\in D^{-1} B_w: \lam+D x\in\NN_0^q\}  .
$$


\begin{lemma}  \label{box}
For each $\th>1$ there exists a $\del>0$ such that, for all $w$ in~$\ww_\del$ and $\nu$
large enough,
\begin{align*}
\th^{-1} \nn_\lam\left(\strut D^{-1}B_w\th\right)
&\le \nu^{(q-s)/2}\PP\{Y\in \lam\oplus B_w\}/\b(\lam)\\
& \le \th \nn_\lam\left(\strut D^{-1}B_w/\th\right)
\end{align*}
where $\b(\lam)$ is a factor that stays bounded away from zero
and infinity as~$\nu\to\infty$.
\end{lemma}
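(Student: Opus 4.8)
The plan is to turn $\PP\{Y\in\lam\oplus B_w\}$ into a sum over the $\kappa_V$ lattice points of the box, replace each summand by the Gaussian estimate of Lemma~\ref{pointwise}, and recognize the result as a Riemann approximation to $\nn_\lam$ of the box. Write $\PP\{Y\in\lam\oplus B_w\}=\sum_{x\in\xx_w}\PP\{\bY=\lam+Dx\}$, a sum of exactly $\kappa_V$ terms. For $w\in\ww_\del$ with $\del$ small, each $x\in\xx_w$ has $|x|=O(\del\sqrt\nu)$: indeed $|\bz_w|\le C_2\del\nu$ by~\cref{max.ta}, while~\cref{min.llam} forces $D^{-1}$ to shrink all lengths by a factor of order~$\nu^{-1/2}$. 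Hence $\max_i\lam_i^{-1}|\ell_i-\lam_i|=O(\del)$, Lemma~\ref{pointwise} applies to every term, and
$$\th^{-1}\phi(\th x)\le \nu^{q/2}\PP\{\bY=\lam+Dx\}/\gam(\lam)\le\th\,\phi(x/\th).$$

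Two further ingredients are needed. First, the elementary lattice identity $\Leb_\lam(D^{-1}B_w)=\kappa_V\,v_\lam$, where $v_\lam$ is the covolume of $D^{-1}(\ZZ^q\cap\ll)$ inside $D^{-1}\ll$: the half-open box $B_0$ is a fundamental cell of the sublattice $\sum_\a\ZZ\bV_\a$, whose index in $\ZZ^q\cap\ll$ is exactly $\kappa_V$, and $D^{-1}$ rescales that covolume and the box volume by the same Jacobian. By~\cref{min.llam}, $v_\lam$ has exact order $\nu^{-s/2}$, so $\gam(\lam)\,\nu^{-s/2}/v_\lam$ stays bounded above and below; this is what I would carry in $\b(\lam)$. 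Second, a Riemann estimate: the $\kappa_V$ points $\{x/\th:x\in\xx_w\}$ lie in $D^{-1}B_w/\th$ and form a translate of $\th^{-1}D^{-1}(\ZZ^q\cap\ll)$, whose cells tile that box; and $D^{-1}B_w/\th$ has diameter $O(\nu^{-1/2})$ while lying within distance $O(\del\sqrt\nu)$ of the origin, so $\bigl||y|^2-|y'|^2\bigr|=O(\del)$ there, uniformly over $w\in\ww_\del$. Hence $\phi$ varies over the box by at most a factor $e^{O(\del)}$, giving
$$\sum_{x\in\xx_w}\phi(x/\th)=e^{O(\del)}\,\frac{\th^{s}}{v_\lam}\,\nn_\lam\!\left(\strut D^{-1}B_w/\th\right).$$

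Combining the last display with the pointwise bound and multiplying by $\nu^{(q-s)/2}$ yields the upper half of the Lemma, once the bounded prefactor $\gam(\lam)\nu^{-s/2}/v_\lam$, the powers of $\th$, and the $e^{O(\del)}$ are collected into $\b(\lam)$ (choose $\del$ small after fixing $\th$; the Lemma requires only that $\b(\lam)$ stay bounded away from $0$ and $\infty$). The lower half is the mirror image, using the points $\{\th x\}$, the box $D^{-1}B_w\th$, and the $\th^{-1}\phi(\th x)$ side of Lemma~\ref{pointwise}.

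The delicate point is the bookkeeping in this Riemann step. The Gaussians $\phi(x/\th)$ and $\phi(\th x)$ handed over by Lemma~\ref{pointwise} over- and under-shoot the true value $\phi(x)\approx\nu^{q/2}\PP\{\bY=\lam+Dx\}/\gam(\lam)$ by factors that grow with $|x|$, so the two-sided estimate closes only because the comparison regions are the matching dilated boxes $D^{-1}B_w\th$ and $D^{-1}B_w/\th$; arranging that the discretization error and the within-box oscillation of $\phi$ sit inside the stated $\th$-slack, uniformly in $w\in\ww_\del$, and that the accumulated constants still leave $\b(\lam)$ of order one, is where the care is required. All the needed pieces are already available---Lemma~\ref{pointwise}, the count $\kappa_V$ with the covolume identity, and the uniform oscillation bound for $\phi$ coming from~\cref{min.llam}.
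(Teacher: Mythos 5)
Your proof is correct and follows essentially the same route as the paper's: apply Lemma~\ref{pointwise} termwise to the $\kappa_V$ lattice points of $\xx_w$, use the fact that $\phi$ oscillates by only a factor $e^{O(\del)}$ over each rescaled box (since the box has diameter $O(\nu^{-1/2})$ and sits at distance $O(\del\sqrt\nu)$ from the origin), and match the discrete sum against the Gaussian integral over the matching dilated box. The only cosmetic difference is that the paper anchors its estimate at the corner point $\bx_w=D^{-1}\bz_w$ and invokes ``invariance properties of Lebesgue measure'' to produce the bounded factor $\mu(\lam)$ with $\Leb_\lam(D^{-1}B_0)=\nu^{-s/2}\mu(\lam)$, whereas you make this explicit via the covolume identity $\Leb_\lam(D^{-1}B_0)=\kappa_V v_\lam$; the resulting $\b(\lam)=\gam(\lam)\nu^{-s/2}/v_\lam$ agrees with the paper's $\kappa_V\gam(\lam)/\mu(\lam)$, so the two are one and the same argument.
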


\begin{proof}
As the proofs for the two inequalities are  similar, we consider only the upper bound.

Define $\bx_w :=D^{-1}\bz_w$.  By inequality~\cref{max.ta} we have $|\bz_w|\le
C_2\del\nu$ and hence $|\bx_w|\le C_4\del\sqrt\nu$ for some constant~$C_4$.  Similarly, for
each
$\yb = \lam +D\bx$ in~$\lam\oplus B_w$ we have $|y-\lam-\bz_w|$ bounded by a constant,
which implies  $|\bx-\bx_w|\le C_5/\sqrt\nu$ and hence
$$
|\,|\bx|^2-|\bx_w|^2\,| \le \del_0:=C_5^2/\nu + 2(C_5/\sqrt\nu)C_4\del\sqrt\nu  .
$$
It follows that for each $\eps>0$ and $\sig$ close enough to~$1$,
$$
\sup\{|\phi(\bx/\sig)/\phi(\bx_w/\sig)-1|: \bx\in D^{-1}B_w\} < \eps
$$
if $\nu$ is large enough and $\del$  is small enough.

Taking~$\sig$ equal to the $\th$ from Lemma~\ref{pointwise} we then have
\begin{align*}
\PP\{Y\in \lam\oplus B_w\}
&=  \sum\NL_{x\in\xx_w}\PP\{Y=\lam+Dx\}\\
&\le \th \gam(\lam)\nu^{-q/2}\sum\NL_{x\in\xx_w}\phi(x/\th)\\
&\le \th \gam(\lam)\nu^{-q/2}\kappa_V (1+\eps)\phi(\bx_w/\th).
\end{align*}
Similarly,
\begin{align*}
\nn_\lam(D^{-1}B_w/\th) &=
\int \{\th \bt\in D^{-1}B_w\}\phi(\bt)\Leb_\lam(d\bt)\\
&= \th^{-s}\int \{\bx\in D^{-1}B_w\}\phi(\bx/\th)\Leb_\lam(d\bx)\\
&\ge \th^{-s}\phi(\bx_w/\th)(1-\eps)\Leb_\lam (D^{-1}B_0).\\
\end{align*}
The invariance properties of Lebesgue measure imply  existence of some
function~$\mu(\lam)$ that stays bounded away from zero and infinity as~$\nu$ tends to
infinity, for which
 $\Leb_\lam (D^{-1}B_0)=\nu^{-s/2}\mu(\lam)$.
Thus
$$
\PP\{Y\in \lam\oplus B_w\} \le \th^{s+1}\frac{1+\eps}{1-\eps}\nu^{-(q-s)/2}
\frac{\kappa_V\gam(\lam)}{\mu(\lam)}
\nn_\lam(D^{-1}B_w/\th) .
$$
Choose $\eps$ small enough and
replace~$\th$ by a value closer to~$1$ to get the  upper half of the asserted
inequality, with $\b(\lam)=\kappa_V\gam(\lam)/\mu(\lam)$.
\end{proof}

\begin{corollary}  \label{hyperplane}
$\PP\{Y\in\lam\oplus\ll\}= \nu^{-(q-s)/2}\left(\strut \b(\lam)+o(1)\right)$
\end{corollary}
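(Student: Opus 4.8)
The plan is to obtain the asymptotics for $\PP\{Y\in\lam\oplus\ll\}$ by summing the box-by-box estimate of Lemma~\ref{box} over all $w\in\ZZ^s$: Lemma~\ref{tails} lets me throw away the boxes indexed by $w\notin\ww_\del$, and the fact that $\nn_\lam$ is a probability measure carried by $D^{-1}\ll$ lets me collapse the sum of the surviving normal masses. First I would use the disjoint partition $\ll=\bigsqcup_{w\in\ZZ^s}B_w$ (hence $\lam\oplus\ll=\bigsqcup_w(\lam\oplus B_w)$) to split
$$
\PP\{Y\in\lam\oplus\ll\}=\sum\NL_{w\in\ww_\del}\PP\{Y\in\lam\oplus B_w\}+\PP\{Y\in\lam\oplus(\ll\bsl\ll_\del)\}.
$$
Since $\lam\oplus(\ll\bsl\ll_\del)\subseteq(\lam\oplus\ll_\del)^c$, the last term is at most $\PP\{Y\notin\lam\oplus\ll_\del\}=O(e^{-C_\del\nu})$ by Lemma~\ref{tails}, so it contributes only $o(\nu^{-(q-s)/2})$ and may be ignored.

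Next I would fix $\th>1$, take the corresponding $\del=\del_\th$ furnished by Lemma~\ref{box}, and sum the two-sided bound of that lemma over $w\in\ww_\del$, getting
$$
\th^{-1}\b(\lam)\sum\NL_{w\in\ww_\del}\nn_\lam(\th D^{-1}B_w)
\;\le\;\nu^{(q-s)/2}\sum\NL_{w\in\ww_\del}\PP\{Y\in\lam\oplus B_w\}
\;\le\;\th\b(\lam)\sum\NL_{w\in\ww_\del}\nn_\lam(D^{-1}B_w/\th).
$$
I then evaluate the two normal sums. Because $D$ is invertible and $\ll$ is a linear subspace, each of $\{\th D^{-1}B_w\}_{w\in\ZZ^s}$ and $\{D^{-1}B_w/\th\}_{w\in\ZZ^s}$ is again a partition of $D^{-1}\ll$; as $\nn_\lam$ is a probability measure supported on $D^{-1}\ll$, the full sums over $w\in\ZZ^s$ both equal $1$. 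Hence $\sum_{w\in\ww_\del}\nn_\lam(D^{-1}B_w/\th)\le1$, while $\sum_{w\in\ww_\del}\nn_\lam(\th D^{-1}B_w)=1-\nn_\lam(\th D^{-1}(\ll\bsl\ll_\del))$, and the leftover normal mass is again exponentially small — either by checking $\th D^{-1}\ll_\del\supseteq D^{-1}\ll_\del$ (so Lemma~\ref{tails} applies verbatim) or, more simply, by noting as in the proof of Lemma~\ref{tails} that every point of $\th D^{-1}(\ll\bsl\ll_\del)$ has norm exceeding $\del_0\sqrt\nu$ and so the mass is bounded by a standard-normal tail. Combining, for every fixed $\th>1$ and all large $\nu$,
$$
\th^{-1}\b(\lam)-o(1)\;\le\;\nu^{(q-s)/2}\PP\{Y\in\lam\oplus\ll\}\;\le\;\th\b(\lam)+o(1).
$$

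Finally, $\b(\lam)$ need not converge but it stays below some constant $b_2<\infty$, so the sandwich gives $\bigl|\nu^{(q-s)/2}\PP\{Y\in\lam\oplus\ll\}-\b(\lam)\bigr|\le(\th-1)b_2+o(1)$; taking $\th\downarrow1$ after $\nu\to\infty$ yields $\PP\{Y\in\lam\oplus\ll\}=\nu^{-(q-s)/2}(\b(\lam)+o(1))$, as claimed. The only real obstacle is the bookkeeping with the dilated boxes: $\th B_w$ is not a union of the $B_{w'}$, so one cannot re-index the dilated partition box-by-box; instead one must argue at the level of the whole subspace (where dilation by $\th$ is harmless) and control the dilated tail set $\th D^{-1}(\ll\bsl\ll_\del)$ by a norm bound rather than by a lattice-point count. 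Everything else is a routine summation of the per-box estimates.
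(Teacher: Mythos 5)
Your proof is correct and follows the same route as the paper: partition $\lam\oplus\ll$ into the boxes $\lam\oplus B_w$, discard the tail via Lemma~\ref{tails}, apply the two-sided box estimate of Lemma~\ref{box}, sum the normal masses, and send $\th\downarrow1$. You are somewhat more explicit than the paper about two minor points it leaves implicit — that the dilated families $\{\th D^{-1}B_w\}_w$ and $\{D^{-1}B_w/\th\}_w$ each re-partition $D^{-1}\ll$ (rather than being a re-indexing of the original boxes), and that the final $o(1)$ comes from taking $\th\to1$ after $\nu\to\infty$ while using the uniform boundedness of $\b(\lam)$ — but the substance is identical.
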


\begin{proof}
From Lemmas~\ref{tails} and~\ref{box}, for each~$\th>1$,
\begin{align*}
\PP\{Y\in \lam\oplus\ll\}
&= \PP\{Y\notin\lam\oplus\ll_\del\} +\sum\NL_w\{w\in\ww_\del\}\PP\{Y\in\lam\oplus B_w\} \\
&\le O(e^{-C_\del\nu})
+\th\b(\lam)\nu^{-(q-s)/2}\sum\NL_w\{w\in\ww_\del\}\nn_\lam\left(\strut
D^{-1}B_w/\th\right)\\ 
&\le \th\nu^{-(q-s)/2}\left(\strut \b(\lam)+o(1)\right)
\end{align*}
The argument for the lower bound is similar.
\end{proof}

\begin{corollary}  \label{condit.normal.approx}
For all $\nu$ large enough,
$$
\th^{-1}\nn_\lam(D^{-1}B_w\th) \le \QQ\{Y\in \lam\oplus B_w\}\le 
\th\nn_\lam(D^{-1}B_w/\th) 
$$
 for all~$w\in\ww_\del$.
\end{corollary}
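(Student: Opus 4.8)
The plan is to read the corollary off from Lemma~\ref{box} and Corollary~\ref{hyperplane} via the definition of the conditional expectation $\QQ$. Since each box $B_w$ is a subset of $\ll$, the event $\{Y\in\lam\oplus B_w\}$ is contained in $\{Y\in\lam\oplus\ll\}$, so $\QQ\{Y\in\lam\oplus B_w\}=\PP\{Y\in\lam\oplus B_w\}/\PP\{Y\in\lam\oplus\ll\}$. I would work with an auxiliary $\th_0\in(1,\th)$ and take $\del$ to be the value that Lemma~\ref{box} provides for $\th_0$.

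First I would estimate the numerator. By Lemma~\ref{box} with parameter $\th_0$, for $\nu$ large and every $w\in\ww_{\del}$,
\[
\th_0^{-1}\,\nn_\lam\!\left(D^{-1}B_w\th_0\right)\ \le\ \nu^{(q-s)/2}\PP\{Y\in\lam\oplus B_w\}/\b(\lam)\ \le\ \th_0\,\nn_\lam\!\left(D^{-1}B_w/\th_0\right).
\]
For the denominator I would use Corollary~\ref{hyperplane}, $\PP\{Y\in\lam\oplus\ll\}=\nu^{-(q-s)/2}(\b(\lam)+o(1))$. Forming the quotient, the factors $\nu^{\pm(q-s)/2}$ cancel and the two appearances of $\b(\lam)$ collapse to $\b(\lam)/(\b(\lam)+o(1))$, which equals $1+o(1)$ because $\b(\lam)$ stays bounded away from $0$ and $\infty$; note that this $o(1)$ is the one from Corollary~\ref{hyperplane} and so does not depend on $w$. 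Thus, for $\nu$ large, $\QQ\{Y\in\lam\oplus B_w\}$ lies between $\th_0^{-1}(1+o(1))\,\nn_\lam(D^{-1}B_w\th_0)$ and $\th_0(1+o(1))\,\nn_\lam(D^{-1}B_w/\th_0)$ uniformly over $w\in\ww_{\del}$. Finally, since $\th_0$ was an arbitrary number in $(1,\th)$, I would choose $\nu$ large enough that the $w$-free factor $1+o(1)$ obeys $\th_0(1+o(1))\le\th$ and $\th_0^{-1}(1+o(1))\ge\th^{-1}$, and then simply relabel $\th_0$ as $\th$; this is the same parameter-shifting bookkeeping used in the last line of the proof of Lemma~\ref{box}, and it delivers the two-sided bound in the asserted form for all $w\in\ww_{\del}$.

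I do not expect a genuine obstacle: the corollary is a direct bookkeeping consequence of the two preceding results. The one point that repays a moment's attention is that the error absorbed in passing to $\QQ$ is uniform over $w\in\ww_{\del}$, and that is automatic — the only $w$-dependent estimate used, Lemma~\ref{box}, is already uniform over $w\in\ww_{\del}$, while the normalizing estimate, Corollary~\ref{hyperplane}, contains no $w$ at all.
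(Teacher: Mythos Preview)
Your proposal is correct and matches the paper's intended argument: the paper states this corollary without proof, immediately after Lemma~\ref{box} and Corollary~\ref{hyperplane}, so it is meant to be read off by dividing the former by the latter and absorbing the $1+o(1)$ factor via the same ``replace $\th$ by a value closer to~$1$'' device used at the end of the proof of Lemma~\ref{box}. Your remark that the $o(1)$ from Corollary~\ref{hyperplane} is $w$-free, so the absorption is uniform over $w\in\ww_\del$, is exactly the point that makes the two-line deduction legitimate.
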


We now have all  the facts needed to prove Theorem~\ref{normal.bound}.
The argument is a slight modification of the method used to prove Lemma~\ref{box}. Start
with the~$\del$ and~$\ll_\del$ from that Lemma. Assertion~(i), modulo an unimportant
constant, was established at the start of the proof of the Lemma.

Define $f(x):=\exp(g(|x|^2))$.  From the proof of the Lemma we know that
$
|\,|\bx|^2-|\bx_w|^2\,| \le \del_0
$.
By uniform continuity of~$g$,
if $\del$ is small enough we then have
$$
|g(|x|^2/\sig^2) -g(|x_w|^2/\sig^2)| <\eps\qt{all $x\in D^{-1}B_w$, all $\sig\approx1$}
$$
and hence
$$
e^{-\eps}f(x_w/\sig) \le f(x/\sig) \le e^\eps f(x_w/\sig)
\qt{all $x\in D^{-1}B_w$, all $\sig\approx1$.}
$$
Use the bounds on~$f$ on~$D^{-1}B_w$ to deduce that
\begin{align*}
\QQ f(X)\{Y\in \lam\oplus  B_w\}
&\le e^\eps f(x_w)\QQ \{Y\in \lam\oplus  B_w\}\\
&\le e^\eps f(\th x_w)\th\nn_\lam(D^{-1}B_w/\th) \qt{as $g$ is increasing}\\
&\le  
e^{2\eps}\th \nn_\lam f(\th x)\{x\in D^{-1}B_w/\th\}
\end{align*}
Sum over $w$ in~$\ww_\del$. to complete the argument.

\section{Proof of Lemma~\ref{h.ineq}}  \label{delicate}
At a key step in the argument we will need the inequality
\begin{equation}
\psi(t)\ge 2\log(1+2t)/(1+2t)\qt{for all $t\ge 0$},
\label{psi.lower}
\end{equation}
for which, unfortunately, we have no direct analytic proof. 
However, the assertion is trivially true  near the origin because the lower bound tends to
zero as~$t$ tends to zero. For large~$t$ the ratio of $\psi(t)$ to the lower bound tends
to~$2$. For intermediate values we have only a  proof based on an analytic bound on
derivatives together with  numerical calculation on a suitably fine grid.  It would be
satisfying to have a completely analytic proof for~\cref{psi.lower}.

Define $g_k(s) :=h(s)-\rho_k s^2$.  We need to show that the function $G_k(u) :=
\sumnl_{j\le k} g_k(u_j)$ is nonnegative on the constraint set.  Suppose the
minimum is achieved at $t=(t_1,\dots,t_k)$. Without loss of generality, we may suppose
$-1\le t_1\le t_2\le \dots\le t_k\le k-1$. We cannot
have~$t_1 =-1$ because $h'(-1)=\infty$. Indeed,
$$
g_k(t_1+\eps)+g_k(t_k-\eps) -g_k(t_1)-g_k(t_k)=\eps\log\eps +O(\eps),
$$
which would be negative for small $\eps>0$.   It then
follows that  $t_k<k-1$ for otherwise the constraint $\sum_j t_j=0$ would force $t_j=-1$ for
$j<k$.

Use Lagrange multipliers (or argue directly regarding the first order effects of 
perturbations $\eps$ with $\sumnl_j\eps_j=0$) to deduce existence of some constant~$\th$
for which 
$g'_k(t_j) =\th$ for all~$j$.

Note that $g'_k(s)=\log(1+s)-2\rho_k s$ is concave (because $g''(s)$ is
decreasing) with $g'_k(0)=0$ and $g''_k(0)=1-2\rho_k>0$.  It follows that
$\th\le 0$ and that there are numbers $-1<a_\th\le 0\le b_\th<k-1$ with
$g'_k(a_\th)=\th=g'_k(b_\th)$ such that $t_j$ equals~$j=a_\th$ for $j\le k-r$ and~$b_\th$
otherwise.   That is $(k-r)a_\th+r b_\th=0$ and
$
G_k(t) = rg_k(b_\th)+(k-r)g_k(a_\th)
$.
 Thus it
suffices for us to show that the functions
$$
M_{r,k}(b) := r g_k(b) + (k-r)g_k(-rb/(k-r))
\qt{for $0\le b < (k-r)/r$}
$$
are nonnegative  for $r=1,2,\dots,k-2$.

For $r\ge2$ and $0\le b\le (k-2)/2$, inequality~\cref{psi.lower} shows that~$g_k(b)$ is
nonnegative:
\begin{align*}
g_k(b) &=  b^2 \left(\strut \frac12\psi(b)-2\rho_k\right)
\ge b^2 \left(\strut \dfrac{\log(1+2b)}{1+2b}-\dfrac{\log(k-1)}{k-1}\right)\ge0
\end{align*}
because $b\mapsto \log(1+2b)/(1+2b)$ is a decreasing function. 
The function~$M_{r,k}(b)$ is then a sum of nonnegative functions on~$[0,(k-r)/r]$.

It remains only to consider the case where~$r$ equals~$1$.
To simplify notation, write $k_1$ for~$k-1$ and abbreviate~$M_{1,k}$ to~$M_k$.
That is,
\begin{align*}
M_k(b) &= h(b) +k_1h(-b/k_1)-\rho_k\left(\strut b^2+k_1(b/k_1)^2\right) \\
&= (1+b)\log(1+b) +(k_1-b)\log(1-b/k_1) - k\rho_kb^2/k_1  ,
\end{align*}
whence
\begin{align*}
M_k'(b) &= \log\left(\strut \frac{1+b}{1-b/k_1}\right)-\frac{2k\rho_kb}{k_1}, \qquad
M_k''(b) = \frac{k}{(1+b)(k_1-b)}-\frac{2k\rho_k}{k_1}.
\end{align*}
Notice that $M_k''(b)\ge0$ except on an interval ~$I_k :=(b_k,b_k')$ in which the
inequality 
$2(1+b)(k_1-b) > k_1/\rho_k$ holds.

For $k=3$ or~$4$ the interval~$I_k$ is empty. The functions~$M_3$ and $M_4$
are convex. They achieve their minima of zero at~$b=0$ because $M_k'(0)=0$.

For~$k\ge 5$, the interval~$I_k$ is nonempty. The
derivative~$M_k'(b)$  achieves its maximum value at $b=b_k$ and its the
minimum value at~$b_k'$.  For $k=5$ we have $b_5'\approx 2.19$ and
$M_5'(b_5')\approx 0.055$.  Thus $M_5$ is an increasing function on~$[0,4]$,
achieving its minimum value of zero at~$b=0$.  

\begin{figure}[htb]
\centerline{
\includegraphics[width=5.5in]{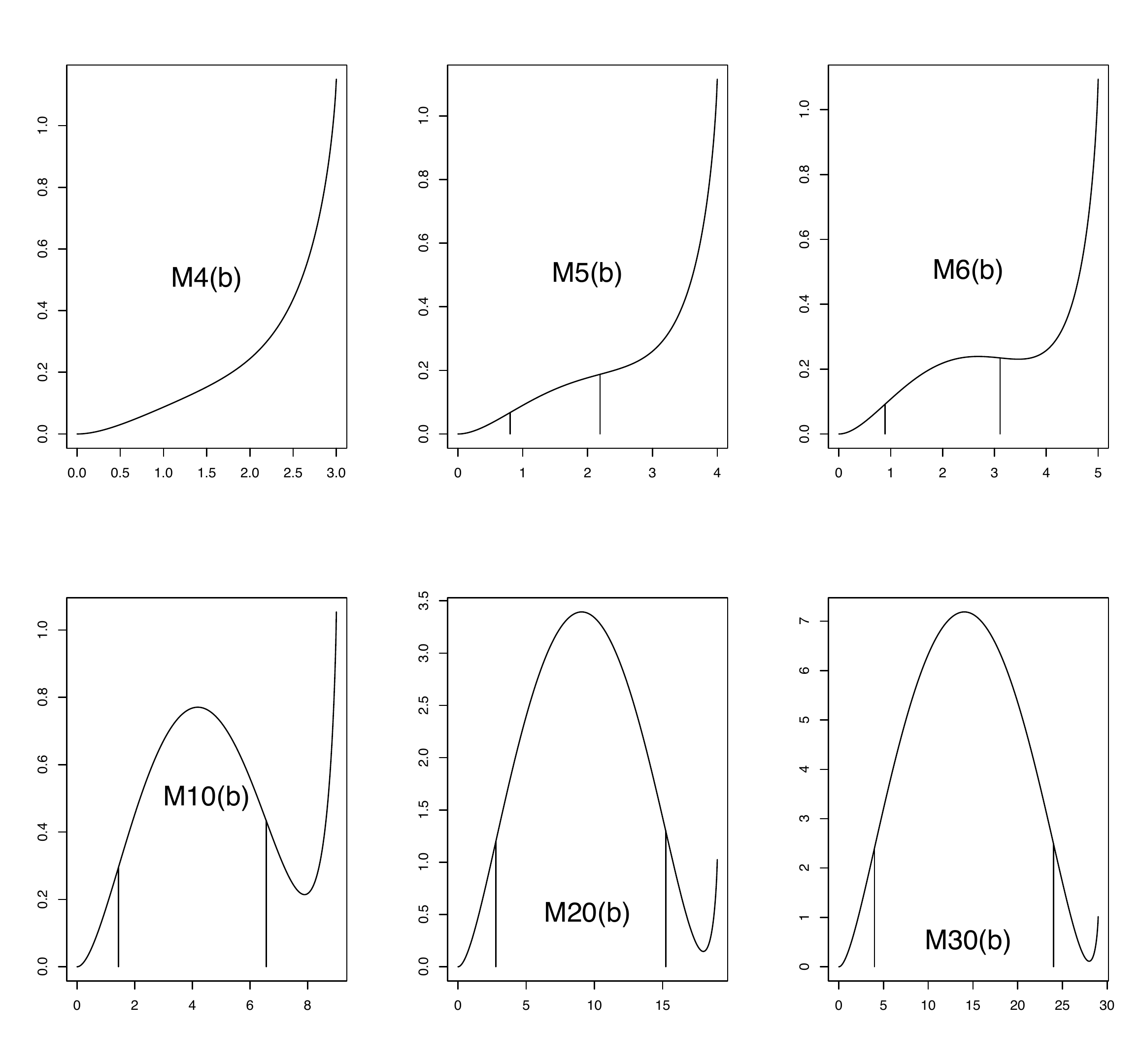}}
\caption{Plots of $M_k(b)$ for various values of~$k$.  The vertical lines mark off the
intervals~$I_k$  where the functions are concave.}
\end{figure}

For $k\ge6$ a more delicate
analysis is required.
The function~$M_k$ is concave on the segment~$I_k$ and
convex on each of the segments $[0,b_k]$ and~$[b_k',k_1]$. The global
minimum is achieved either at~$b=0$, with $M_k(0)=0$, or at the local
minimum~$b^*\in(b_k',k_1)$ where~$M_k'(b^*)=0$ and~$M_k''(b^*)>0$. From the change in sign
of the derivative,
\begin{align*}
M_k'(k_1-1) &= 2\log(k_1)/k_1^2 >0\qt{for all $k$}\\
M_k'(k_1-2) &= \frac{2(k_1+2)\log(k_1)}{k_1^2} +\log\left(\strut \frac{k_1-1}{2k_1}\right)<0
\qt{for
$k\ge6$},
\end{align*}
we deduce that $k_1-2<b^*<k_1-1$. The convexity of~$M_k$ on~$[b_k',k_1]$ then gives a
linear lower bound,
\begin{align*}
M_k(b^*) &\ge M_k(k_1-1) +(b^*-k_1+1)M_k'(k_1-1)\\
&\ge \dfrac{k_1-1}{k_1^2}\log(k_1) - \dfrac{2\log(k_1)}{k_1^2}   \\
&\ge 0 \qt{for $k\ge4$}.
\end{align*}
It follows that $M_k$ is nonnegative also for $k\ge6$.

\bibliographystyle{chicago}
\bibliography{DBP}

\end{document}